\title{Protein Folding Neural Networks Are Not Robust}
\author{
    Sumit Kumar Jha\textsuperscript{\rm 1},
    Arvind Ramanathan\textsuperscript{\rm 2},
    Rickard Ewetz\textsuperscript{\rm 3},
    Alvaro Velasquez\textsuperscript{\rm 4},
    Susmit Jha\textsuperscript{\rm 5}\\
}
\newtheorem{theorem}{Theorem}
\newtheorem{proof}{Proof}
\newcommand{\noop}[1]{}
\begin{document}

\maketitle
\begin{abstract}
Deep neural networks such as  AlphaFold and RoseTTAFold predict remarkably accurate structures of proteins compared to other algorithmic approaches. It is known that biologically small perturbations in the protein sequence do not lead to drastic changes in the protein structure. In this paper, we demonstrate that RoseTTAFold does not exhibit such a robustness despite its high accuracy, and biologically small perturbations for some input sequences result in radically different predicted protein structures. 
This raises the challenge of detecting when these predicted protein structures cannot be trusted. 
We define the robustness measure for the predicted structure of a protein sequence to be the inverse of the 
root-mean-square distance (RMSD) in the predicted structure and the structure of its adversarially 
perturbed  sequence. 
We use adversarial attack methods to create adversarial protein sequences, and 
show that the RMSD in the predicted protein structure ranges from  0.119\r{A} to 34.162\r{A} when the adversarial perturbations are bounded by 20 units in the BLOSUM62 distance.
This demonstrates very high variance in the robustness measure of the predicted structures. 
We show that the magnitude of the correlation (0.917) between our robustness measure 
and the RMSD between the predicted structure and the ground truth is high, that is, the predictions with low robustness measure cannot be trusted. This is the first paper demonstrating the susceptibility of RoseTTAFold to adversarial attacks.

\end{abstract}

\section{Introduction}
Proteins form the building blocks of life as they enable a variety of vital functions essential to life and reproduction. Naturally occurring proteins are bio-polymers composed of 20 amino acids and this primary sequence of amino acids is well known for many proteins, thanks to high throughput sequencing techniques. However, in order to understand the functions of different protein molecules and complexes, it is essential to comprehend their three-dimensional (3D) structure. 
Until recently, one of the grand challenges in structural biology has been the accurate determination of the 3D structure of the protein from its primary sequence. Accurate predictive protein folding promises to have a profound impact on the design of therapeutics for diseases and drug discovery \cite{chan2019advancing}. 
\begin{figure}[t]
\begin{tabular}{cc}
 \hspace{-0.7cm} \includegraphics[width=0.58 \columnwidth]{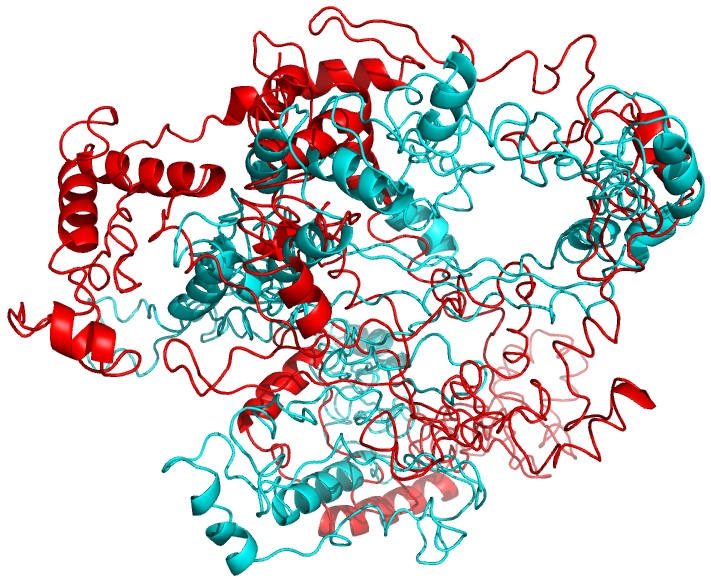}    & \hspace{-0.5cm} \includegraphics[width=0.49 \columnwidth]{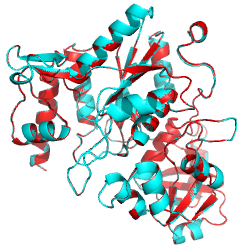} \\
   (i) RMSD=34.162\r{A}  & (ii) RMSD=0.119\r{A}
\end{tabular} 
\caption{The predicted structure for the original (shown in blue) and the adversarial (shown in red) protein sequences for the SFP1 interferon stimulator (left) and 2QJF\_1 human bifunctional 3'-phosphoadenosine 5'-phosphosulfate synthetase 1 (right) as predicted by RoseTTAFold and aligned using PyMOL. RMSD is the Root Mean Square Distance, typically measured in \r{A} $ (10^{-10}$m). The predicted structure on the right is robust to biologically small perturbations, while the one on the left is too sensitive even to biologically small changes. }
\label{Fig:intro}
\end{figure}

AlphaFold achieved unparalleled success in predicting protein structures using neural networks and remains first at the CASP2020 competition. RoseTTAFold is the first open source deep learning model for predicting protein structure with high accuracy. 
In this paper, we use adversarial attack methods to investigate the ability of these models to generalize to new sequences and  propose a robustness metric that identifies input protein sequences on which the output of these protein folding neural networks (PFNN) cannot be trusted. We make the following contributions in this paper:

\begin{enumerate}[(i)]
    \item We demonstrate the susceptibility of RoseTTAFold to adversarial attacks~\cite{goodfellow2018making} by generating several examples where protein sequences which vary only in five residues result in very different 3D protein structures. We use sequence alignment scores~\cite{henikoff1992amino} such as those derived from Block Substitution Matrices (BLOSUM62) to identify a space of similar protein sequences used in constructing adversarial perturbations. %
    We define a new robustness measure for the predicted structure of a protein sequence to be the inverse of the root-mean-square distance (RMSD) in the predicted structure and the structure of its adversarially perturbed  sequence. 
    \item We propose the use of different distance measures based on co-ordinate geometry and distance geometry~\cite{blumenthal1970theory}, including a fast linear-time provably approximate variant of the distance geometry metric. The latter is invariant under rigid-body motion and, hence, does not suffer from the deficiencies of the co-ordinate geometry representation during adversarial attacks. Our distance geometry based approach produces adversarial examples that have up to 37 times larger RMSD than those produced using co-ordinate geometry based representation. Further, our approximate distance geometry approach is provably approximate and produces no less than 0.69 times the exact RMSD while achieving 14X speedups in our experiments on a 80GB A100 GPU system.
    \item Our  experiments show that different input protein sequences have very different adversarial robustness.
    RMSD in the  protein structure predicted by RoseTTAFold~\cite{baek2021accurate}  ranges from  0.119\r{A} to 34.162\r{A} when the adversarial perturbations are bounded by 20 units in the BLOSUM62 distance.
     Hence, our proposed approach can be used to identify protein sequences on which the predicted 3D structure  cannot be trusted.
     We show that the magnitude of the correlation (0.917) between our robustness measure and the RMSD between the predicted structure and the ground truth is high. This compares favorably with DeepAccNet~\cite{hiranuma2021improved} (correlation magnitude: 0.869) used for estimating accuracy of predicted protein structures. 

\end{enumerate}

\section{Summary and Related Work}

\subsection{Motivation}

PFNNs~\cite{jumper2021highly,baek2021accurate} should be expected to obey the natural observation that biologically small changes in the sequence of a protein usually do not lead to drastic changes in the protein structure. Almost four decades ago, it was noted that two structures with  50\% sequence identity align within approximately 1\r{A} RMSD from each other~\cite{chothia1986relation}. Two proteins with even 40\% sequence identity and at least 35 aligned residues align within approximately 2.5\r{A}~\cite{sander1991database}. The phenomenon of sequence-similar proteins producing similar structures have also been observed in larger studies~\cite{rost1999twilight}. As with almost any rule in biology, a small number of counterexamples to the similar sequence leading to similar structure dogma do exist, where by even small perturbations can potentially alter the entire fold of a protein. However, such exceptions are not frequent and often lead to exciting investigations~\cite{Cordes_2000,tuinstra2008interconversion}.

\subsection{Robustness Metric using Adversarial Attacks}
The similar-sequence implies similar-structure paradigm dictates that PFNNs should make robust predictions. %
Given a protein sequence $S$ with a three-dimensional structure $\mathcal{A}(S)$, we define a notion of biologically similar sequence structures $\mathcal{S}$ using Block Substitution Matrices (BLOSUM)~\cite{henikoff1992amino}, and then employ adversarial attacks~\cite{goodfellow2018making} on PFNNs within this space of similar sequences to identify a sequence $S_{adv} \in \mathcal{S}$ that produces a maximally different three-dimensional structure $\mathcal{A}(S_{adv})$. We then compute a distance $D_{str}(\mathcal{A}(S), \mathcal{A}(S_{adv}))$ between the structures for the original input and the adversarial structure, and use the inverse of this distance as the robustness  measure. If the distance is small, the response of the PFNN is robust; a large distance indicates that the predicted structure is not robust.  

\subsection{BLOSUM Similarity Measures}
We identify the space of biologically similar sequences $\mathcal{S}$ for a given protein sequence $S$. We expect the predicted structures for the similar sequences to be similar. If there is a large RMSD between
the predicted structure $\mathcal{A}(S)$ and  the structure $\mathcal{A}(S_{adv})$ of the adversarial sequence 
 $S_{adv} \in \mathcal{S}$, 
it would reflect a lack of robustness in the prediction of the network. We adopt a sequence similarity measure that counts replacement frequencies in conserved blocks across different proteins; the BLOSUM matrix $[B_{ij}]$ is a matrix of integers, where each entry denotes the similarity between a residue of type $b_i$ and type $b_j$.

Given two sequences $S=s_1 s_2 \dots s_n$ and $S'=s'_1 s'_2 \dots s'_n$ where $T(s_i)$ denotes the type of the residue $s_i$, the BLOSUM distance between the two sequences is given by $D_{seq}(S,S') = \sum_{1 \leq i \leq n} \left( B_{{T(s_i)}{T(s_i)}} - B_{{T(s_i)}{T(s'_i)}} \right)$. If $s_i=s'_i$ i.e. the two sequences have the same residue, the contribution of this residue is 0; otherwise, the contribution of the change in residue to the BLOSUM distance is the positive integer $B_{{T(s_i)}{T(s_i)}}$ less the new contribution $B_{{T(s_i)}{T(s'_i)}}$.

\subsection{Distance Geometric Representation}
Adversarial attacks on predicted structures need a distance measure $D$ between two 3D structural predictions $\mathcal{A}(S)$ of sequence $S$ and $\mathcal{A}(S')$ of sequence $S'$. The most straightforward candidate is the distance $D_{co\-ord}$ between the three-dimensional co-ordinates of the predicted structures. However, this leads to scenarios where the adversarial attack causes significant translation and rotation of the structures, and does not necessarily change the structure of the protein that is invariant to rigid-body motion. See Fig.~\ref{Fig:geometric_init} for an illustration of an adversarial attack using co-ordinate geometry representations.
\begin{figure}
    \centering
    \includegraphics[width=4cm]{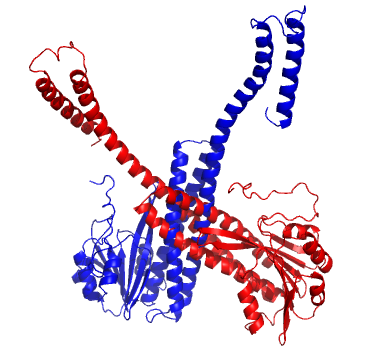}
    \caption{Adversarial attacks using co-ordinate representation may produce structures that are rotations, translations or mirror images of the original structure without being substantially different.}
    \label{Fig:geometric_init}
\end{figure}
Hence, we propose using distance geometry during adversarial attacks to describe a protein structure $S=s_1 s_2 \dots s_n$ by the (element-wise square of the) Euclidean distance matrix $[d(s_i,s_j)]_{n \times n}$ between all pairs $s_i$ and $s_j$ of its residues. Given two such distance geometry representations $[d(s_i,s_j)]_{n \times n}$ and $[d(s'_i,s'_j)]_{n \times n}$ of two sequences $S$ and $S'$, the distance-geometric measure between the two structures is defined as $D_{distgeo} = \sum_{ 1 \leq i,j \leq n} \left( d(s_i,s_j) - d(s'_i,s'_j) \right)^2$. Theorem~\ref{theorem1} shows that this distance measure collapses to 0 only when the two structures can be aligned perfectly; it is invariant to rigid-body motions.

\subsection{Approximate Distance}
Our computational experiments (see Table~\ref{table:geometric}) show that the distance geometry based representations requires more than a day of computation time on a 80GB A100 GPU. These results are consistent with the fact that the distance geometry approach needs $\mathcal{O}(n^2)$ computations for a protein sequence of length $n$. 
Instead of computing the difference between sum of distance between all pairs of two structures, we seek to compute the distance $D_{max} = \left( d(s_a, s_b) - d(s'_a, s'_b) \right)^2$ between the farthest pair of residues $s_a$ and $s_b$ in the sequence $S$ and the corresponding residues $s'_a$ and $s'_b$ in $S'$. Theorem~\ref{theorem2} shows that it is possible to approximate $D_{max}$ using a $\mathcal{O}(n)$ algorithm. Table~\ref{table:approx1} and~\ref{table:approx2} show that this approximation approach is up to 14X faster and still produces adversarial examples that are no less than 0.69 times the exact RMSD.

\section{Approach}
\label{sec:approach}

Our approach to evaluating the robustness of PFNNs is based on two main ideas: (i) the existence of adversarial examples in PFNNs that produce adversarial structures possibly very different from the original structure, and (ii) the use of BLOSUM matrices for identifying a neighborhood of a given sequence that are biologically similar and hence expected to have similar 3D structures. We employ the inverse of the RMSD between the structure of an original protein sequence and the structure of the adversarial sequence as a measure of robustness of a protein folding network on the given input sequence.
\subsection{Adversarial Attacks on PFNN}

Small carefully crafted changes in a few pixels of input images cause well-trained neural networks with otherwise high accuracy to consistently produce incorrect responses in domains such as computer vision~\cite{croce2020robustbench,andriushchenko2020square,bai2020improving,croce2021mind}. 
Given a neural network $\mathcal{A}$ mapping a sequence $S$ of residues to a three-dimensional geometry $\mathcal{A}(S)$ describing the structure of the protein, we seek to obtain a sequence $S'$ such that the sequence similarity measure $D_{seq}(S, S')$ between $S$ and $S'$ is small (that is, it is below a threshold $T$ set to $20$ in our experiments), and the structural distance measure $D_{str}(\mathcal{A}(S), \mathcal{A}(S'))$ is maximized. This can be achieved by solving the following optimization problem:

$$\max_{S'} (D_{str}\left(\mathcal{A}(S), \mathcal{A}(S')\right)  \; s.t. \; D_{seq}(S,S') \leq T$$
\subsection{Structural Distance Measure}
Given a sequence of $n$ residues $S=s_1 s_2 \dots s_n$, its three dimensional structure $\mathcal{A}(S)$ is an ordered $n$-tuple of three-dimensional co-ordinates $(x_1, y_1, z_1), \dots (x_n,y_n,z_n)$. Similarly, the structure of a sequence $S'=s'_1 s'_2 \dots s'_n$ is given by $\mathcal{A}(S') = (x'_1, y'_1, z'_1), \dots (x'_n,y'_n,z'_n)$. Our goal is to develop a structural distance measure that captures the variations in the two structures $\mathcal{A}(S)$ and $\mathcal{A}(S')$ and is invariant to rigid-body motion. Further, we seek to ensure that the structural distance measure can be approximated efficiently in linear time.

\subsubsection{Inadequacy of Co-ordinate Representations} A straightforward solution to define the structural similarity measure would be to use the Euclidean metric over a co-ordinate geometric representation: $D_{coord}(\mathcal{A}(S),\mathcal{A}(S')) =  \sum_{1 \leq i \leq n} \left( (x_i-x'_i)^2+(y_i-y'_i)^2+(z_i-z'_i)^2 \right)$. However, simple translation of the structural co-ordinates of one structure or their relative rigid body rotation causes the Euclidean metric in the co-ordinate geometric representation to increase without really changing the three-dimensional structure of the proteins. See Figures~\ref{Fig:geometric_init} and~\ref{Fig:geometric}
for illustrations showing the inadequacy of co-ordinate geometric representations. 
\subsubsection{Distance Geometry Representation}
In order to compute the distance between two structures, we use the distance geometry representation  and denote each structure in terms of the pairwise distance $d(s_i,s_j)=   \left( (x_i-x_j)^2 + (y_i-y_j)^2 + (z_i-z_j)^2 \right)$ between all pairs of residues in the three-dimensional structure. Hence, we denote the predicted structure $\mathcal{A}(S)$ of the protein using a two-dimensional $n \times n$ matrix of pairwise distance measures $[d(s_i,s_j)]_{n \times n}$. This representation is independent of translation and rigid-body rotation, and adversarial attacks have no incentives for rotating or translating protein structures.

Given the Euclidean distance matrices $[d(s_i,s_j)]_{n \times n}$ and $[d(s'_i,s'_j)]_{n \times n}$ of two sequences $S$ and $S'$ with structures $\mathcal{A}(S)$ and $\mathcal{A}(S')$, we use the Frobenius norm of the difference of the two matrices  $D_{distgeo} = \sum_{1 \leq i,j \leq n} (d(s_i,s_j) - d(s'_i,s'_j))^2 $. The new distance measure $D_{distgeo}$ is invariant to rigid-body motion, including rotations and translations as shown in Theorem~\ref{theorem1} (proof in appendix).

\begin{theorem}[Uniqueness]
\label{theorem1}
Two protein structures are identical up to rigid body transformations if and only if $D_{distgeo}(\mathcal{A}(S),\mathcal{A}(S'))=0$
\end{theorem}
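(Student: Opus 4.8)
The plan is to prove both implications of the biconditional, treating the squared Euclidean distance matrix as the central object. Throughout, write $p_i = (x_i,y_i,z_i)$ for the coordinates of residue $i$ in $\mathcal{A}(S)$ and $q_i = (x_i',y_i',z_i')$ for those in $\mathcal{A}(S')$, so that $d(s_i,s_j) = \|p_i - p_j\|^2$ and $d(s_i',s_j') = \|q_i - q_j\|^2$. Since $D_{distgeo}$ is a sum of squares, $D_{distgeo} = 0$ holds exactly when $\|p_i - p_j\| = \|q_i - q_j\|$ for every pair $i,j$; the whole theorem therefore reduces to the classical distance-geometry fact that two labelled point configurations in three-dimensional space have identical pairwise distances if and only if one is the image of the other under a distance-preserving map of space. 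The forward direction is immediate: every rigid-body transformation is an isometry, so it leaves all pairwise distances unchanged, forcing each summand of $D_{distgeo}$, and hence the sum, to vanish. The substance of the theorem is the converse.

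For the converse I would first remove the translational degree of freedom by re-centering both structures at a common reference residue, say translating so that $p_1 = q_1 = 0$; this changes no pairwise distance. Writing $v_i = p_i - p_1$ and $w_i = q_i - q_1$, the key observation is the polarization identity $\langle v_i, v_j \rangle = \frac{1}{2}\left(\|v_i\|^2 + \|v_j\|^2 - \|v_i - v_j\|^2\right)$, which expresses every inner product purely in terms of distances from the origin and mutual distances. Because all of these distances agree between the two configurations (they are among the equal pairwise distances established above), the Gram matrices coincide: $\langle v_i, v_j\rangle = \langle w_i, w_j \rangle$ for all $i,j$.

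The main step is then to upgrade this equality of Gram matrices into an honest orthogonal map. I would define a linear map $Q$ on the span of $\{v_i\}$ by $Q v_i = w_i$ and verify it is well defined: if $\sum_i c_i v_i = 0$ then $\|\sum_i c_i w_i\|^2 = \sum_{i,j} c_i c_j \langle w_i, w_j\rangle = \sum_{i,j} c_i c_j \langle v_i, v_j \rangle = \|\sum_i c_i v_i\|^2 = 0$, so $\sum_i c_i w_i = 0$. The very same computation shows that $Q$ preserves inner products, hence is a linear isometry between the span of $\{v_i\}$ and the span of $\{w_i\}$; since these subspaces have equal dimension, $Q$ extends to an orthogonal transformation of the whole space by mapping an orthonormal basis of the orthogonal complement of one span to such a basis of the other. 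Composing $Q$ with the two translations yields a Euclidean isometry $T(x) = Q(x - p_1) + q_1$ satisfying $T(p_i) = q_i$ for all $i$, which is the desired rigid-body transformation.

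The one point that requires care -- and which I expect to be the real subtlety rather than any calculation -- is the status of reflections. The orthogonal map $Q$ produced above may have determinant $-1$, i.e.\ it may be an improper isometry (a mirror reflection) rather than a pure rotation with translation. Since the distance matrix is genuinely invariant under reflection, a chiral structure and its mirror image have $D_{distgeo} = 0$ without being related by any orientation-preserving motion; this is exactly the mirror-image phenomenon already noted for the coordinate representation in Fig.~\ref{Fig:geometric_init}. The theorem is therefore correct precisely when \emph{rigid body transformation} is read as the full group of Euclidean isometries -- rotations, translations, and reflections -- and I would state this convention explicitly, since the distance-geometric measure cannot, and is not intended to, detect handedness.
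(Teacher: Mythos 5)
Your proof is correct, but it takes a genuinely different route from the paper's. The paper argues by induction on the sequence length $n$: the base case $n=2$ is handled by translating both first residues to the origin and rotating the second onto a common axis, and the inductive step aligns the two $(n{-}1)$-residue fragments $S(2{:}n)$ and $S'(2{:}n)$ using the inductive hypothesis and then argues that $s_1$ and $s'_1$ must coincide because the origin ``is a solution'' to the system of distance equations $x_j^2+y_j^2+z_j^2=(x'_1-x_j)^2+(y'_1-y_j)^2+(z'_1-z_j)^2$. You instead give the classical distance-geometry argument: recenter at a common residue, use the polarization identity to convert equality of pairwise distances into equality of Gram matrices, and read off a well-defined linear isometry $Q v_i = w_i$ that extends to an orthogonal map of all of $\mathbb{R}^3$. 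Your route is the more robust of the two: the well-definedness computation handles degenerate (collinear or coplanar) configurations automatically, whereas the paper's inductive step only exhibits the origin as \emph{a} solution of the distance equations without establishing uniqueness --- and that solution is in fact unique only up to reflection through the affine hull of the remaining residues, which is precisely the ambiguity you isolate. Your closing observation that $Q$ may have determinant $-1$, so that the theorem holds only if ``rigid body transformation'' is read to include improper isometries (mirror images), is a caveat the paper's proof silently glosses over even though its own Figure 2 acknowledges the mirror-image phenomenon; making that convention explicit is a genuine improvement rather than a deviation.
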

\begin{proof}
Rotations, inversions and translations do not change the pairwise distance between points; so, the ``only if" part of the result is straightforward.

\noindent We prove the other direction by induction on the length $n$ of the proteins. For the base case, we consider $n$=2.
If $D_{distgeo}=0$, $d(s_1,s_2) = d(s'_2,s'_2)$  Now, rewriting in terms of Euclidean co-ordinates, $(x_1-x_2)^2+(y_1-y_2)^2+(z_1-z_2)^2=(x'_1-x'_2)^2+(y'_1-y'_2)^2+(z'_1-z'_2)^2$. We translate the co-ordinate system for $S$ and $S'$ such that its $s_1$ and $s'_1$ are aligned at the origin; then $x_1=x'_1=0, y_1=y'_1=0, z_1=z'_1=0$. Thus, the following follows from $D_{distgeo}=0$: $x_2^2+y_2^2+z_2^2={x'_2}^2+{y'_2}^2+{z'_2}^2 = h^2 \textrm{(say)}$ in the new co-ordinate system. We rotate the co-ordinate system such that $x_2=x'_2=0$ and $y_2=y'_2=0$. Then, $z_2=z'_2=h$ in the new co-ordinates. Thus, the two structures are identical up to translation and rotation.

Our inductive hypothesis assumes that two protein structures of length $n-1$ or less are identical up to rigid body transformations if $D_{distgeo}(\mathcal{A}(S),\mathcal{A}(S'))=0$. 

Inductive Step: We consider two protein structures $S$ and $S'$ of length $n$ such that $D_{distgeo}=0$ i.e. $d(s_i,s_j) = d(s'_i,s'_j)$ for all residues $i$ and $j$ as $D_{distgeo}$ is the sum of squares. Using the inductive hypothesis, we translate the co-ordinate system for $S$(2:n) and $S'$(2:n) such that the two $n-1$ residue fragments are aligned and then translate until $s_1$ is at the origin. Then, rewriting in terms of Euclidean co-ordinates, $(x_i-x_j)^2+(y_i-y_j)^2+(z_i-z_j)^2=(x'_i-x'_j)^2+(y'_i-y'_j)^2+(z'_i-z'_j)^2$ for all residues $i$ and $j$. In particular, $(x_1-x_j)^2+(y_1-y_j)^2+(z_1-z_j)^2=(x'_1-x'_j)^2+(y'_1-y'_j)^2+(z'_1-z'_j)^2$ for all values of $j$. Since the other $n-1$ residues are aligned and $s_1$ is at the origin, this can be simplified to $x_j^2+y_j^2+z_j^2=(x'_1-x_j)^2+(y'_1-y_j)^2+(z'_1-z_j)^2$. Clearly, $x'_1=0, y'_1=0, z'_1=0$ is a solution to this system of equations. Hence, $s'_1$ can also be placed at the origin and the two structures $S$ and $S'$ are identical.
\end{proof}

While the new distance geometric measures are invariant to rigid-body dynamics, they are computationally expensive. Instead of the $\mathcal{O}(n)$ computation required to compute structural distance measures between two  protein structures represented using co-ordinate geometry, the computation of the distance measure using distance geometry $D_{distgeo}$ needs $\mathcal{O}(n^2)$ pairwise computations. As shown in Table~\ref{table:geometric}, this leads to a significant computational challenge and our adversarial robustness computation take more than a day on a 80GB A100 GPU.
\subsubsection{Approximation Strategy}
Another quantity that is invariant to translations and rotations but still useful in measuring the structural changes is the maximum distance between two residues in a protein structure.  Let $s_a$ and $s_b$ be the farthest residues in the protein $S$.  We suggest replacing the sum computation in $D_{distgeo}$ by the  computation $D_{max}(S,S') = (d(s_a,s_b) - d(s'_a,s'_b))^2$. Here, we assume that 
$d(s'_a,s'_b)=\alpha d(s_s,s_b)$ for the farthest pair $a, b$ and some distortion $\alpha$ depending on the sequence. Instead of computing the sum of all changes, this measure computes the impact of the adversarial attack on the farthest residues. 

\begin{theorem}[Strategy for $4(1-\alpha)^2$-Approximation]
\label{theorem2}
Consider a residue $s_k$ in the protein and determine the farthest residue $s_l$ from this residue in the protein. Then, let $s_m$ be the farthest residue from $s_l$. Given a distortion factor $\alpha$, $d^2(s_m,s_l)$ is a $4(1-\alpha)^2$-approximation for $D_{max}$.
\end{theorem}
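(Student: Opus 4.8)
The plan is to recognize the stated procedure as the classical linear-time $2$-approximation of the diameter of a point set, and then to push that factor of two through a single squaring and through the distortion assumption. First I would fix notation to avoid confusing a distance with its square: write $\delta(s_i,s_j)=\|s_i-s_j\|$ for the genuine Euclidean distance between residues, so that $d^2(s_i,s_j)$ denotes $\delta(s_i,s_j)^2$, the quantity the algorithm actually returns. Let $(s_a,s_b)$ be the true farthest pair, i.e.\ $\delta(s_a,s_b)=\max_{i,j}\delta(s_i,s_j)$ is the diameter. The procedure picks an arbitrary $s_k$, takes $s_l$ to be the residue maximizing $\delta(s_k,\cdot)$, and takes $s_m$ to be the residue maximizing $\delta(s_l,\cdot)$; each of these is a single linear scan over the $n$ residues, which is what makes the whole computation $\mathcal{O}(n)$.

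The heart of the argument is a triangle-inequality estimate on $\delta$. Since $s_m$ maximizes $\delta(s_l,\cdot)$, we have $\delta(s_l,s_a)\le\delta(s_l,s_m)$ and $\delta(s_l,s_b)\le\delta(s_l,s_m)$. Applying the triangle inequality to the diameter pair gives $\delta(s_a,s_b)\le\delta(s_a,s_l)+\delta(s_l,s_b)\le 2\,\delta(s_l,s_m)$, and trivially $\delta(s_l,s_m)\le\delta(s_a,s_b)$ because $(s_l,s_m)$ is itself a legal pair. Hence $\tfrac12\,\delta(s_a,s_b)\le\delta(s_l,s_m)\le\delta(s_a,s_b)$. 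I would stress that the first sweep (choosing $s_l$ as the point farthest from $s_k$) is only a heuristic that sharpens the estimate in practice; the provable factor uses solely the property that $s_m$ is farthest from $s_l$, so the bound is insensitive to how $s_k$ was chosen.

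To finish I would square this chain and incorporate the distortion. Squaring yields $\tfrac14\,\delta(s_a,s_b)^2\le d^2(s_m,s_l)\le\delta(s_a,s_b)^2$, i.e.\ $\delta(s_a,s_b)^2\le 4\,d^2(s_m,s_l)$. Substituting the modelling assumption $\delta(s'_a,s'_b)=\alpha\,\delta(s_a,s_b)$ collapses the definition $D_{max}=(\delta(s_a,s_b)-\delta(s'_a,s'_b))^2$ to $(1-\alpha)^2\,\delta(s_a,s_b)^2$. Combining the two gives $D_{max}\le 4(1-\alpha)^2\,d^2(s_m,s_l)$, with the matching lower bound $(1-\alpha)^2\,d^2(s_m,s_l)\le D_{max}$ coming from the right-hand inequality above, which is precisely the claimed $4(1-\alpha)^2$-approximation.

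The step I expect to be the main obstacle is the careful separation of a distance from its square: the triangle inequality is available only for the Euclidean distance $\delta$ and not for the squared quantities that populate the distance-geometry matrix, so the factor of two from the diameter estimate has to be squared into the factor of four at exactly the moment one passes from $\delta$ to $d^2$. A secondary point that must be made explicit, rather than derived, is that the single distortion factor $\alpha$ introduced for the true farthest pair is taken to govern the estimated pair $(s_l,s_m)$ as well; this is the assumption stated just before the theorem, and I would invoke it directly instead of trying to prove a per-pair distortion bound.
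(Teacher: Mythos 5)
Your proof follows essentially the same route as the paper's: the classical two-sweep diameter bound, using the maximality of $s_m$ together with the triangle inequality on $\triangle s_a s_b s_l$ to get $\delta(s_a,s_b)\le 2\,\delta(s_l,s_m)$, then squaring and substituting $D_{max}=(1-\alpha)^2\delta(s_a,s_b)^2$. Your write-up is in fact slightly more careful than the paper's, which applies the triangle inequality directly to $d$ even though $d$ is defined earlier in the text as the \emph{squared} Euclidean distance (for which the triangle inequality does not hold in general); your explicit separation of the genuine distance $\delta$ from the squared quantity, together with the matching lower bound $\delta(s_l,s_m)\le\delta(s_a,s_b)$, tidies up exactly that point without changing the argument.
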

\begin{proof}
\label{proof:triangle}
We know that $s_a$ and $s_b$ are the farthest residues in a protein $S$ and further $D_{max}(S, S') = (1-\alpha)^2 d^2(s_a,s_b)$ by definition of $D_{max}$ and $\alpha$.

Now, consider points $s_a, s_b, s_m$ and $s_l$. By choice of $s_m$, we know that $d(s_l,s_m)$ is at least as large as $d(s_l,s_a)$ and $d(s_l,s_b)$ i.e. $d(s_l,s_m) \geq d(s_l,s_a)$ and $d(s_l,s_m) \geq d(s_l,s_b)$. 

\noindent Then, applying triangle inequality to $\Delta s_a s_b s_l$, we get $d(s_a, s_b) \leq d(s_l,s_a) + d(s_l, s_b)$. 

\noindent Putting the above two steps together, we get $d(s_a,s_b) \leq 2 d(s_l,s_m)$. Thus, $D_{max}(S) \leq 4 (1-\alpha)^2  d^2(s_l,s_m)$.
\end{proof}

The approximate distance measure using distance geometry $D_{approx}(S,S') = d^2(s_m,s_l)$ is a $4(1-\alpha)^2$-approximation to the square of the change in the maximum distance between pairs of points $D_{max}$ as shown in Theorem~\ref{theorem2} (proof in appendix). Thus, the distance measure $D_{approx}(S,S')$ provides a $\mathcal{O}(n)$ approach for computing change in protein structures during adversarial attacks that are invariant to translations and rotations.
\subsection{Sequence  Similarity Measures}

Given two sequences of $n$ residues $S=s_1 s_2 \dots s_n$ and $S'=s'_1 s'_2 \dots s'_n$, a natural question is to compute the sequence similarity $D_{seq}$ between these proteins. A naive approach would be to count only the number of residues that are different; however, an analysis of naturally occurring proteins shows that all changes in residues do not have the same impact on protein structures. Changes to one type of residue are more likely to cause structural variations than changes to another type of residue.

Early work in bioinformatics focused on properties of amino acids and reliance on genetic codes. However, more modern methods have relied on the creation of amino-acid scoring matrices that are derived from empirical observations of frequencies of amino acid replacements in homologous sequences~\cite{dayhoff197822,jones1992rapid}.

The original scoring matrix, called the PAM250 matrix, was based on empirical analysis of $1,572$ mutations observed in $71$ different families of closely-related proteins which are 85\% or more identical after they have been aligned. The PAM1 model-based scoring matrix was obtained by normalizing the frequency of mutations to achieve a 99\% identity between homologous proteins. These results were then extrapolated to create scoring matrices PAM10, PAM30, PAM70 and PAM120 with 90\%, 75\%, 55\%, and 37\% identity between homologous proteins.

\begin{figure*}
    \centering
\begin{tabular}{ccc}
 \includegraphics[width=0.5 \columnwidth]{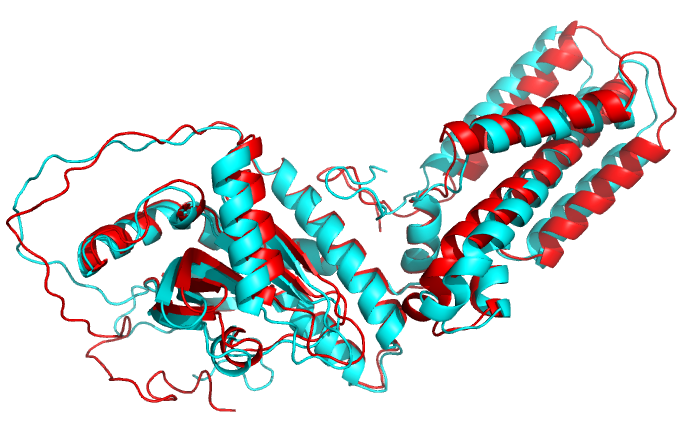} 
 & \includegraphics[width=0.5 \columnwidth]{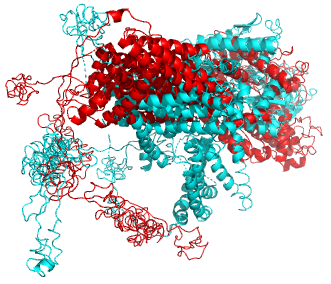} 
 &  \includegraphics[width=0.5 \columnwidth]{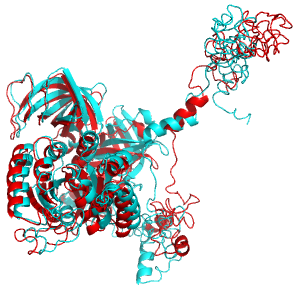} 
\\
 (a) 6NT5\_1
 & (b) 60EU\_1
 & (c) NP\_010457.3\\
 \includegraphics[width=0.5 \columnwidth]{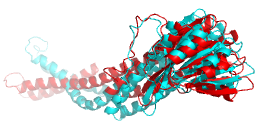} 
 & \includegraphics[width=0.5 \columnwidth]{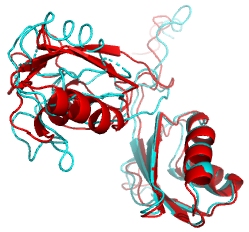} 
 & \includegraphics[width=0.5 \columnwidth]{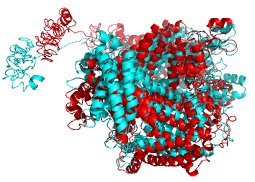} 
\\
 (d) 2BBJ\_1
 & (e) 2ROP\_1
 & (f) 6NT3\_1\\
\end{tabular}

\caption{The predicted structure of protein sequences (blue) and the structures of their adversarial perturbations (red) produced using RoseTTAFold and aligned using PyMOL. }
    \label{Fig:case_studies}
\end{figure*}

Another interesting approach~\cite{henikoff1992amino} to understanding protein similarity is the direct counting of replacement frequencies using the so-called Block Substitution Matrices (BLOSUM). Instead of relying solely on sequences of homologous proteins that are relatively harder to find, the new approach focuses on identifying conserved blocks or conserved sub-sequences in a larger variety of proteins potentially unrelated by evolutionary pathways. The BLOSUM approach then counts the frequency of replacements within these conserved sub-sequences. BLOSUM 62, BLOSUM 80 and BLOSUM 90 denote block substitution matrices that are obtained from blocks or subsequences with at least 62\%, 80\% and 90\% similarity, respectively. The BLOSUM matrix $[B_ij]$ is a matrix of integers where each entry denotes the similarity between residue of type $b_i$ and type $b_j$

Given two sequences $S=s_1 s_2 \dots s_n$ and $S'=s'_1 s'_2 \dots s'_n$ where $T(s_i)$ denotes the type of the residue $s_i$, the BLOSUM distance between the two sequences is given by $D_{seq}(S,S') = \sum_{1 \leq i \leq n} \left( B_{{T(s_i)}{T(s_i)}} - B_{{T(s_i)}{T(s'_i)}} \right)$. If $s_i=s'_i$, i.e. the two sequences have the same residue, the contribution of this residue is 0; otherwise, the contribution of the change in residue to the BLOSUM distance is the positive integer $B_{{T(s_i)}{T(s_i)}}$ less the new contribution $B_{{T(s_i)}{T(s'_i)}}$. In our experiments, we have used a sequence similarity measure derived from the popularly used BLOSUM62 matrix.

\section{Experimental Results}

Our computational experiments employ the RoseTTAFold protein structure prediction neural network and investigate the following questions:
\begin{enumerate}
    \item Can we show that adversarial attacks on protein folding networks demonstrate that the predicted 3D structures are not robust for some protein sequences?
    If so, does our robustness measure (where higher robustness corresponds to low RMSDs) correlate with the accuracy of the predicted 3D protein structures?
    \item Do larger BLOSUM62 distances (corresponding to low biological similarity) between adversarial and original sequences lead to higher RMSDs in predicted adversarial structures?
    \item Can three-dimensional co-ordinate representations correctly estimate RMSDs of adversarial examples?
    \item Is our approximation strategy for maximum distance between pairs of residues justified in terms of computational time and the accuracy of RMSDs?
\end{enumerate}

We answer the first question affirmatively using quantitative case studies in the next subsection, where RMSDs ranges from 0.119\r{A} to 34.162\r{A}. The magnitude  of  the  correlation  between  our  robustness  measure  and  the  RMSD between the predicted structure and the ground truth is 0.917, that is, the predictions with low robustness measure cannot be trusted. The second question is also answered affirmatively in the next section where we compare results from BLOSUM62 sequence similarity measures of 20 and 40.
We answer the third question in the negative and thereby establish the need for our distance geometry based distance measures for adversarial attacks on protein folding networks. Finally, we show that the exact approach can be prohibitively expensive and can take more than 1 day for a single sequence on a 80GB A100 GPU. On the other hand, our approximation strategy takes about 2 hours and 10 minutes, and produces RMSD values that are no less than 0.69 times the exact RMSD values.

\subsection{Quantitative Case Studies}
We performed adversarial attacks on 10 naturally occurring proteins. The adversarial sequences and the original sequences differed in BLOSUM62 distance of at most 20, and are thus biologically close to each other.

\subsubsection{6NT5\_1: Stimulator of interferon protein in human beings}
As shown in Fig.~\ref{Fig:case_studies}(a), the alignment involving all residues has a RMSD score of 5.026\r{A} showing that the structures for the original input and its adversarial perturbation are quite different.
\subsubsection{SFP1\_YEAST: Yeast transcription factor}
Figure~\ref{Fig:intro}(left) shows its structure and the structure of its adversarial perturbation aligned together with a high RMSD score of 25.641\r{A}.
\subsubsection{6OEU\_1: Protein patched homolog 1 } Figure~\ref{Fig:case_studies}(b) shows the aligned structures for the sequence and its adversarial perturbation. The two sequences were aligned using PyMOL and achieved a RMSD of  34.162\r{A}.
\subsubsection{NP\_010457.3: Yeast translation termination factor GTPase eRF3}
Figure~\ref{Fig:case_studies}(c) shows the structure of the original sequence and the structure of the adversarial sequence  aligned together with a high RMSD of 6.870\r{A}.
\subsubsection{2BBJ\_1: Eubacteria CorA Mg2+ transporter}
Figure~\ref{Fig:case_studies}(d) shows the structures for the original and adversarial sequences with a high RMSD score of 6.76\r{A}.
\subsubsection{2ROP\_1: Human Copper-transporting ATPase 2}
As illustrated in Figure~\ref{Fig:case_studies}(e), the alignment between the structure corresponding to this sequence and its adversarial perturbation has a high RMSD score of 8.495.
\subsubsection{6NT3\_1: Sodium channel protein type 9}
As shown in Figure~\ref{Fig:case_studies}(f), the alignment between the structures corresponding to the original and the adversarial sequences has a high RMSD score of 15.361\r{A}.
\subsubsection{6O77\_1: Transient receptor potential catio channel subfamily M member 8}
The structures corresponding to the original  and the adversarial sequences are very different with a RMSD of 19.001\r{A}.

\subsubsection{2QK4\_1: Trifunctional purine biosynthetic protein adenosine-3}
The structures corresponding to the original and the adversarial sequences are quite similar and their alignment has a RMSD of only 0.730\r{A}.

\subsubsection{2QJF\_1: Bifunctional 3'-phosphoadenosine 5'-phosphosulfate synthetase 1}
Figure~\ref{Fig:intro}(right) shows the alignment between the structures corresponding to the original and the adversarial sequences with a really small RMSD score of 0.119\r{A}.

\subsection{Comparison with Ground Truth and DeepAccNet}
An inaccurate prediction has low robustness measure (inverse of RMSD) as shown in  Table~\ref{Table:DeepAcc}. We compare the RMSD distances associated with our robustness measure with the ground truth, and find a  correlation of 0.917.
We also compared the RMSD distance between the ground truth and the predicted structure with the local distance difference test (lDDT) prediction implemented in the DeepAccNet~\cite{hiranuma2021improved} tool. Our results are shown in Table~\ref{Table:DeepAcc}. 

\begin{table}[h!]
    \centering
    \begin{tabular}{ccc}
    \toprule
    Computed RMSD  & Ground Truth & DeepAccNet  \\
    (1/robustness-measure)  & RMSD  & Accuracy \\
    \midrule
      5.026   & 4.112  &  0.4631 \\
      25.641   & - &  0.10254 \\
      34.162   & 37.276  & 0.1755 \\
      6.870   & -  & 0.36 \\
      6.760   &  7.593 & 0.6704 \\
      8.495   &  12.355 &  0.4006 \\
      15.361   & 25.453  & 0.1847 \\
       19.001  & 10.933 &  0.3481 \\
       0.730  &  1.299 &  0.7324 \\
        0.119  &  0.922 &  0.6885 \\
    \bottomrule
    \end{tabular}
    \caption{High robustness measure (low RMSD) corresponds to low ground truth RMSD. The correlation measure between robustness measure and ground truth has magnitude 0.917 while the correlation measure between the DeepAccNet lDDT predictions and the ground truth has magnitude 0.869.}
    \label{Table:DeepAcc}
\end{table}

\subsection{Impact of BLOSUM Sequence Similarity Distances}
We have introduced the use of protein sequence similarity measures such as BLOSUM matrices to ensure that adversarial attacks create sequences that are biologically similar. A natural question to investigate is how a change in the bound on biological similarity changes the adversarial sequence. In order to study this, we launched adversarial attacks on the 10 proteins studied in the previous section using thresholds of 20 and 40 BLOSUM62 distances, and computed the RMSD of the two adversarial attacks. 
\begin{figure}[h!]
\centering 
\includegraphics[width=0.6 \columnwidth]{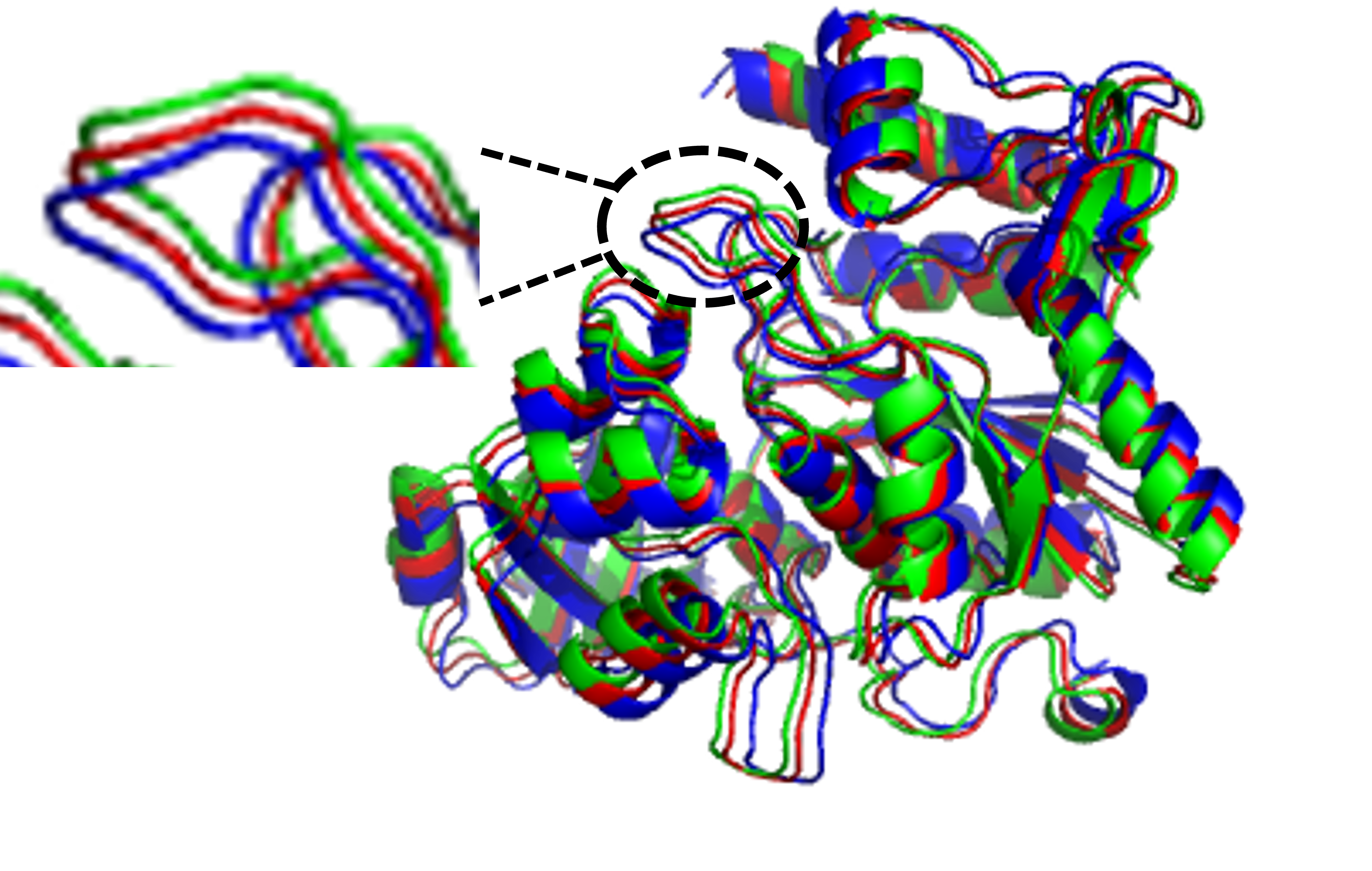} 
\caption{The adversarial sequence for 2QJF\_1 protein  with a BLOSUM62 threshold of 40 (blue) is farther away from the adversarial sequence with a BLOSUM62 threshold of 20 (red). See inset zoomed picture. The original sequence is shown in green.}
\label{Fig:proteinrmsd}
\end{figure}
\begin{table}[h]
    \centering
    \begin{tabular}{crr}
    \toprule
    Protein & \multicolumn{1}{c}{BLOSUM62-20} & \multicolumn{1}{c}{BLOSUM62-40}   \\
     ID & \multicolumn{1}{c}{RMSD (\r{A})} & \multicolumn{1}{c}{RMSD (\r{A})} \\
    \midrule 
      6NT5\_1   & 5.026 &  9.756 \\
      SFP\_1   &  25.641 &  26.355  \\
      6OEU\_1   & 34.162  & 35.295\\
      NP\_010457.3  &  6.870 &   9.609\\
      2BBJ\_1  & 6.760  & 7.504 \\
      2ROP\_1  &  8.495 &  10.969\\
      6NT3\_1  &    15.361  & 17.714  \\
      6O77\_1  & 19.001  & 26.532  \\
      2QK4\_1  &    0.730 &  0.893  \\
      2QJF\_1  & 0.119  &  0.266    \\
\bottomrule
    \end{tabular}
    \caption{Comparison of RMSD for adversarial examples using thresholds of 20 and 40 BLOSUM62 distances.}
    \label{table:blosum_vs_nonblosum}
\end{table}

As shown in Table~\ref{table:blosum_vs_nonblosum}, the RMSD for threshold of 20 BLOSUM62 distance is smaller than the RMSD for threshold of 40 BLOSUM62 distance.

Figure~\ref{Fig:proteinrmsd} shows the adversarial examples obtained from one protein sequence. In particular, we observe that the adversarial attack using a threshold of 40 for the BLOSUM62 distance makes more  changes to the structure than the adversarial attack constrained by a threshold of 20 for the BLOSUM62 distance.

\subsection{Co-ordinate  vs.  Distance Geometry}

We have introduced the idea of using  distance geometry instead of geometric co-ordinates for adversarial attacks on PFNNs. We present experimental results to show that the use of geometric co-ordinates leads to adversarial structures that are different from the original structure in geometric co-ordinates, but such structures can be translated, rotated and flipped to nearly the same structure. We illustrate this with an adversarial attack example in Fig.~\ref{Fig:geometric}.
\begin{figure}[h]
    \centering
    \includegraphics[width=0.8\columnwidth]{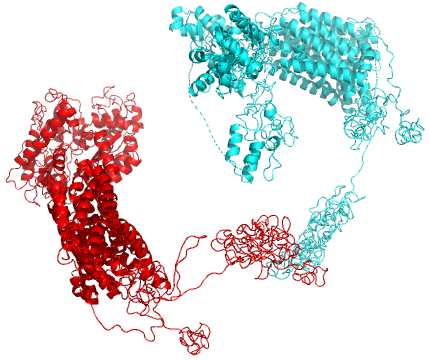}
    \caption{Adversarial attacks using geometric co-ordinates produce adversarial structures that are essentially translated, rotated or mirror images of the original structure.}
    \label{Fig:geometric}
\end{figure}

Table~\ref{table:geometric} shows the quantitative results of computing adversarial attacks using distance geometry and using 3-dimensional co-ordinate geometry, respectively. Geometric co-ordinates are not robust to rigid-body motion, and hence high loss functions during attacks do not correlate well with high RMSDs after alignment.

\begin{table}[h]
    \centering
    \begin{tabular}{crr}
    \toprule
    Protein & \multicolumn{1}{c}{Distance} & \multicolumn{1}{c}{Three-dimensional}   \\
     ID & \multicolumn{1}{c}{Geometry} & \multicolumn{1}{c}{Co-ordinates}    \\
      & \multicolumn{1}{c}{RMSD (\r{A})} & \multicolumn{1}{c}{RMSD (\r{A})} \\
    \midrule 
      6NT5\_1   & 5.026 &   3.685 \\
      SFP\_1   &  25.641 &   21.530  \\
      6OEU\_1   & 34.162  &  18.132 \\
      NP\_010457.3  &  6.870 &   0.165 \\
      2BBJ\_1  & 6.760  &  0.293 \\
      2ROP\_1  &  8.495 &  5.014 \\
      6NT3\_1  &    15.361  &  14.933  \\
      6O77\_1  & 19.001  &   17.558 \\
      2QK4\_1  &    0.730 &   0.042 \\
      2QJF\_1  & 0.119  &   0.104   \\
\bottomrule
    \end{tabular}
    \caption{Comparison of RMSD for distance geometry vs. co-ordinate geometry representations after alignment using PyMOL. Our distance geometry based approach produces adversarial examples that have up to 37 times larger RMSD than those produced using co-ordinate geometry. }
    \label{table:geometric}
\end{table}

\subsection{Exact vs. Approximation Distance Measures}

We have introduced an approximation strategy for computing adversarial examples for protein folding neural networks using distance geometry. Theoretically, this leads to a simplification from $\mathcal{O}(n^2)$ computations for computing the exact Euclidean distance matrix to $\mathcal{O}(n)$ computations for our approximation strategy.
Our approximate distance geometry approach is provably approximate and produces no less than 0.69 times the exact RMSD (Table~\ref{table:approx2}) while achieving 14X speedups (Table~\ref{table:approx1}) in our experiments on a 80GB A100 GPU system.

\begin{table}[h]
    \centering
    \begin{tabular}{crr}
    \toprule
    Protein & \multicolumn{1}{c}{Exact} & \multicolumn{1}{c}{Approximate}   \\
     ID & \multicolumn{1}{c}{Distance Matrices} & \multicolumn{1}{c}{Distance Matrices}    \\
      & \multicolumn{1}{c}{RMSD (\r{A})} & \multicolumn{1}{c}{RMSD (\r{A})} \\
    \midrule 
      6NT5\_1   &   9.053 &  5.026    \\
      SFP\_1   &    28.452 &  25.641    \\
      6OEU\_1   &  36.647 &   34.162    \\
      NP\_010457.3  &  11.529 &  6.870     \\
      2BBJ\_1  & 8.868   &   6.760   \\
      2ROP\_1  &  10.595 &  8.495   \\
      6NT3\_1  &  25.640  &   15.361   \\
      6O77\_1  &  21.813 & 19.001     \\
      2QK4\_1  &     1.096 &  0.730     \\
      2QJF\_1  &     0.146 & 0.119       \\
\bottomrule
    \end{tabular}
    \caption{Comparison of RMSD for exact and approximate Euclidean distance measures after alignment using PyMOL.}
    \label{table:approx2}
\end{table}

\begin{table}[h]
    \centering
    \begin{tabular}{ccccc}
    \toprule
    & \multicolumn{2}{c}{Exact} &  \multicolumn{2}{c}{Approximate} \\
    Protein & \multicolumn{1}{c}{Runtime} & \multicolumn{1}{c}{Memory} & \multicolumn{1}{c}{Runtime} & \multicolumn{1}{c}{Memory}  \\
     ID & \multicolumn{1}{c}{(min)} & \multicolumn{1}{c}{(MB)} & \multicolumn{1}{c}{(min)} & \multicolumn{1}{c}{(MB)}    \\
    \midrule 
      6NT5\_1   &  146.58 & 25,003 & 22.13 & 18,433   \\
      SFP\_1   &    372.23 &   52,809 & 43.27 & 29,229\\
      6OEU\_1   &   840.37 &  79,593 &  130.28 & 79,635 \\
      NP\_010457  & 421.08   & 58,021 & 33.03 & 34,265  \\
      2BBJ\_1  &  192.17 & 27,867  &  21.01 & 18,315 \\
      2ROP\_1  &    66.54 & 19,011  & 11.18 & 12,035\\
      6NT3\_1  &     1375.31  & 77,141  & 102.08 & 77,171  \\
      6O77\_1  &    1476.17 &  74,965  & 105.10  & 74,965 \\
      2QK4\_1  &    185.28  &  39,031 & 21.26 & 31,003  \\
      2QJF\_1  &    124.43 &  21,725 & 13.23 & 20,713    \\
\bottomrule
    \end{tabular}
    \caption{Comparison of exact vs. approximate approach in terms of runtime and memory.}
    \label{table:approx1}
\end{table}

\section{Conclusions \& Future Work}
The progress made in recent years on the prediction of protein folding structures promises to enable profound advances in the understanding of diseases, and the design of drugs and therapeutics. However, until these predictions are shown to be robust, the grand challenge of predictive protein folding persists. In this paper, we have presented the first step in this direction by demonstrating that PFNNs are susceptible to adversarial attacks. 
  RMSD in the  protein structure predicted by RoseTTAFold~\cite{baek2021accurate}  ranges from  0.119\r{A} to 34.162\r{A} when the adversarial perturbations are bounded by 20 units in the BLOSUM62 distance.
    We define a new robustness measure for the predicted structure of a protein sequence, and
    show that the magnitude of the correlation (0.917) between our robustness measure and the RMSD between the predicted structure and the ground truth is high. 
     Hence, our robustness measure can be used to responsibly deploy PFNNs.

\bibliography{aaai22}

\end{document}